\crefname{equation}{}{}
\newcommand{\ve}[1]{\boldsymbol{#1}}
\title{Bounds on Codes Correcting Tandem and Palindromic Duplications}
\author{Andreas Lenz \inst{1} \thanks{This work was supported by the Institute for Advanced Study (IAS), Technische Universit\"{a}t M\"{u}nchen (TUM), with funds from the German Excellence Initiative and the European Union's Seventh Framework Program (FP7) under grant agreement no.~291763.} \and Antonia Wachter-Zeh \inst{1} $^\star$ \and Eitan Yaakobi \inst{2} }
\institute{Institute for Communications Engineering, \\Technical University of Munich, Germany \and Computer Science Department, \\ Technion -- Israel Institute of Technology, Haifa, Israel \\
	\email{andreas.lenz@mytum.de, antonia.wachter-zeh@tum.de, yaakobi@cs.technion.ac.il}}
\begin{document}
	
	\maketitle
	\begin{abstract}
		In this work, we derive upper bounds on the cardinality of tandem duplication and palindromic deletion correcting codes by deriving the generalized sphere packing bound for these error types. We first prove that an upper bound for tandem \emph{deletions} is also an upper bound for \emph{inserting} the respective type of duplications. Therefore, we derive the bounds based on these special \emph{deletions} as this results in tighter bounds. We determine the spheres for tandem and palindromic duplications/deletions and the number of words with a specific sphere size. 
		Our upper bounds on the cardinality directly imply lower bounds on the redundancy which we compare with the redundancy of the best known construction correcting arbitrary burst errors.
		Our results indicate that the correction of palindromic duplications requires more redundancy than the correction of tandem duplications. Further, there is a significant gap between the minimum redundancy of duplication correcting codes and burst insertion correcting codes.
	\end{abstract}

	\section{Introduction}
	The increasing demand for high density and long-term data storage and the recent advance in biotechnological methodology has motivated the storage of digital data in DNA. One interesting application in this area involves the storage of data in the DNA of living organisms. Tagging genetically modified organisms, infectious bacteria, conducting biogenetical studies or storing data are only a few in a list of modern applications. However, the data is corrupted by errors during the replication of DNA and therefore an adequate error protection mechanism has to be found. Typical errors include point insertions, deletions, substitutions and tandem or palindromic duplications. While the correction of substitutions, insertions and deletions is well studied, knowledge about correcting tandem and palindromic duplication errors is relatively limited. In the former case, a subsequence of the original word is duplicated and inserted directly after the original subsequence.
	An example for a tandem duplication of length~$3$ in a DNA sequence $GATCATG$ is $GATC\underline{ATC}ATG$, where the underlined part highlights the duplication. Similarly, a palindromic duplication in the same word is $GATC\underline{CTA}ATG$. The focus of this paper is to determine the minimum redundancy needed to correct tandem and palindromic duplications.
	
	The idea of upper bounding the sizes of insertion/deletion correcting codes by the fractional transversal number of the associated hypergraph has been introduced in \cite{Kulkarni13}.
	In \cite{Fazeli15}, this procedure has been analyzed and generalized to other error models, such as the $Z$-channel, grain-error channel, and projective spaces. Further, it has been shown that the average sphere packing value provides a valid upper bound on code sizes, if the associated hypergraph is regular and symmetric. Repetition errors form a related error model to tandem and palindromic duplications and corresponding error correcting codes have been studied in \cite{Dolecek10}. Codes correcting tandem duplications have been considered in \cite{Jain16}, where amongst others a construction for the correction of an arbitrary number of fixed length duplications was presented. These codes are based on choosing irreducible words with respect to tandem duplications and their relation to zero run-length-limited systems has been illustrated. In this work we employ the method presented in \cite{Fazeli15}, known as the generalized sphere packing bound for tandem and palindromic duplications.
	\subsection{Preliminaries} \label{ss:preliminaries}
	We denote $\ve{x} = (x_1, x_2, \dots, x_n) \in \mathbb{Z}_q^n$ to be a vector of $n$ symbols over the ring of integers modulo $q$, $x_i \in \mathbb{Z}_q \, \forall \, i$. The length of a word $\boldsymbol{x}$ is denoted by $|\ve{x}|$. A tandem duplication of length $\ell$ at position~$i$ with $0\leq i \leq n-\ell$ in a word $\ve{x} = (\ve{uvw})$, with $|\ve{u}| = i, |\ve{v}| = \ell, |\ve{w}| = n-\ell-i$ is defined by $\tau_{i,\ell}(\ve{x}) = (\ve{uvvw}) \in \mathbb{Z}_q^{n+\ell}$ and a palindromic duplication is defined by $\pi_{i,\ell}(\ve{x}) = (\ve{uv} \ve{v}^\mathrm{R} \ve{w})$, where $\ve{v}^\mathrm{R} = (v_\ell v_{\ell-1} \dots v_1)$ is the reversal of $\ve{v}$. The inverse operation, a tandem deletion of length $\ell$ at position $0\leq i \leq n-2\ell$ in a word $\ve{x} = (\ve{uvvw})$ with $|\ve{u}| = i, |\ve{v}| = \ell, |\ve{w}| = n-2\ell-i$ is denoted by $\tau^\delta_{i,\ell}(\ve{x}) = (\ve{uvw}) \in \mathbb{Z}_q^{n-\ell}$. Finally, we write a palindromic deletion of length $\ell$ at position $0\leq i \leq n-2\ell$ in a word $\ve{x} = (\ve{uv}\ve{v}^\mathrm{R}\ve{w})$ with $|\ve{u}| = i, |\ve{v}| = \ell, |\ve{w}| = n-2\ell-i$ as $\pi^\delta_{i,\ell}(\ve{x}) = (\ve{uvw}) \in \mathbb{Z}_q^{n-\ell}$. Note that the deletion operations are only defined at positions $i$, where the word $\ve{x}$ is of the form $(\ve{uvvw})$, respectively $(\ve{uv}\ve{v}^\mathrm{R}\ve{w})$ with $|\ve{u}| = i$. With these definitions, the sphere of a word $\ve{x}$ is the set of all vectors that are reached by a maximum of $t$ tandem or palindromic duplications, respectively deletions, i.e.,
	\begin{equation} \label{eq:spheres}
	S_{\epsilon, \ell, t}(\ve{x}) = \{\ve{y} | \ve{y} = \epsilon_{i_1, \ell}(...(\epsilon_{i_\theta, \ell}(\ve{x}))...), \theta \leq t\},
	\end{equation}
	with $\epsilon \in  \{\tau, \pi, \tau^\delta, \pi^\delta \} $. Here, $\tau$ and $\tau^\delta$ denote tandem duplications, respectively deletions and $\pi$, $\pi^\delta$ denote palindromic duplications and deletions. By this definition $\ve{x} \in S_{\epsilon, \ell, t}(\ve{x})$ and the size of these sets depends on $\ve{x}$, which is the key complication when computing upper bounds on the code cardinality. For a word~$\ve{x}$, let $r(\ve{x})$ be the number of \emph{runs}, respectively $r_i(\ve{x})$ the number of runs of length~$i$ and $r_{\geq i}(\ve{x})$ be the number of runs of length at least $i$ in $\ve{x}$. For example, the sequence $\ve{x} = (ATTTAAC)$ has $4$ runs, where $2$ are of length $1$, one is of length $2$ and one of length $3$. A codebook $\mathcal{C} \subset \mathbb{Z}_q^n$ is called $t$-\emph{tandem duplication} (\emph{palindromic duplication}, \emph{tandem deletion}, \emph{palindromic deletion}) correcting, if $S_{\epsilon, \ell, t}(\ve{c}_1) \cap S_{\epsilon, \ell, t}(\ve{c}_2) \neq \emptyset$ implies $\ve{c}_1 = \ve{c}_2$ for all $\ve{c}_1, \ve{c}_2 \in \mathcal{C}$. In the following we will omit the index $t$ for the case $t=1$ for the above definitions and use the term \emph{single}-error correcting.
	\section{Relationship of Duplication and Deletion Codes}
	\label{sec:relationship_duplication_deletion_codes}
	\subsection{Equivalence of Tandem Duplication and Deletion Codes} \label{ss:equivalence_tandem_duplication_deletion_codes}
	For insertion and deletion correcting codes, it has been shown that a code $\mathcal{C}$ is $t$-insertion correcting if and only if it is $t$-deletion correcting \cite{Levenshtein66}. A similar behavior can be shown for tandem duplications. To do so, we start with some terminology for tandem duplications, that has been introduced in \cite{Jain16}.
	\begin{definition}[$\ell$-step derivative]
		From $\ve{x} \in \mathbb{Z}_q^n$ we define the $\ell$-step derivative $\phi_\ell(\ve{x})=(\ve{u}, \ve{v})$ with $\ve{u} = (x_1, x_2, \dots, x_\ell)$ and $\ve{v} = (x_{\ell+1}, x_{\ell+2}, \dots, x_n) - $ $(x_1, x_2, \dots, x_{n-\ell})$.
	\end{definition}
	It has been shown in \cite{Jain16} that a tandem duplication of length $\ell$ in $\ve{x}$ corresponds to an insertion of $\ell$ consecutive zeros in $\ve{v}$. This motivates the introduction of the \emph{trunk} and \emph{zero-signature} representation for $\ve{v}$. 
	\begin{definition}[Trunk and $\ell$-zero signature]
		Let $0^m$ denote the $m$-fold repetition of $0$ and let $\ve{v} = (0^{m_0},w_1,0^{m_1},w_2, \dots, w_{p}, 0^{m_p})$ with $w_i \in \mathbb{Z}_q \setminus \{0\}$ and $p = wt_{\mathrm{H}}(\ve{v})$ be the Hamming weight of $\ve{v}$. We then define the \emph{trunk} of $\ve{v}$ to be $\mu_\ell(\ve{v}) = (0^{m_0 \bmod \ell},w_1,0^{m_1 \bmod \ell},w_2, \dots, w_{p}, 0^{m_p \bmod \ell})$ as the word that is obtained by shortening every zeros run of length $m$ to be of length $m \bmod \ell$. Further, the $\ell$-\emph{zero signature} of $\ve{v}$ is defined as $\sigma_\ell(\ve{v}) = \left(\left\lfloor\frac{m_0}{\ell}\right\rfloor, \left\lfloor\frac{m_1}{\ell}\right\rfloor, \dots, \left\lfloor\frac{m_p}{\ell}\right\rfloor\right)$.
	\end{definition}
	Note that $\ve{v}$ is uniquely determined by its trunk $\mu_\ell(\ve{v})$ and zero-signature $\sigma_\ell(\ve{v})$. Further, let $\rho_\ell(\ve{x})$ be the \emph{root} of $\ve{x}$, which is defined as the tuple $\rho_\ell(\ve{x}) = (\ve{u}, \mu_\ell(\ve{v}))$, where $(\ve{u}, \ve{v})$ is the $\ell$-step derivative of $\ve{x}$. Note that for notational reasons this definition of the root is slightly different from that in \cite{Jain16}. It is easy to see that a tandem duplication in $\ve{x}$ corresponds to increasing an entry of $\sigma_\ell(\ve{v})$ by 1 (a tandem deletion corresponds to decreasing the entry by 1), but leaves the root $\rho_\ell(\ve{x})$ unchanged. With these definitions it is possible to introduce the following metric, that is closely related to tandem duplications.
	\begin{definition}[Tandem duplication distance]
		For two words $\ve{x}_1, \ve{x}_2 \in \mathbb{Z}_q^n$ with $\ell$-step derivatives $\phi_\ell(\ve{x}_1) = (\ve{u}_1, \ve{v}_1)$ and $\phi_\ell(\ve{x}_2) = (\ve{u}_2, \ve{v}_2)$ we define the \emph{tandem duplication distance} to be
		\begin{equation}
		d_{\tau, \ell}(\ve{x}_1, \ve{x}_2) = \left\{ \begin{array}{ll}
		\infty, & \mathrm{if} \, \rho_\ell(\ve{x}_1) \neq \rho_\ell(\ve{x}_2) \\
		|\sigma_\ell(\ve{v}_1) - \sigma_\ell(\ve{v}_2)|_1, \quad & \mathrm{if} \, \rho_\ell(\ve{x}_1) = \rho_\ell(\ve{x}_2)
		\end{array} \right.,
		\end{equation}
		with the $l_1$-norm $|\sigma_\ell(\ve{v})|_1 = \sum_{j=1}^{s}|\sigma_\ell(\ve{v})_j|$, where $s=| \sigma_\ell(\ve{v}) |$.
	\end{definition}
	From the results in \cite{Jain16}, we can directly deduce the following two corollaries.
	\begin{corollary} \label{cor:tan_dup_distance}
		A code $\mathcal{C} \subset \mathbb{Z}_q^n$ is $t$-tandem duplication correcting (length $\ell$) if and only if $d_{\tau, \ell}(\ve{c}_1, \ve{c}_2) \geq 2t+1 \,\forall\, \ve{c}_1, \ve{c}_2 \in \mathcal{C}, \ve{c}_1 \neq \ve{c}_2$.
	\end{corollary}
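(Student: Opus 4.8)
The plan is to reduce the statement to the equivalent claim that, for any two words $\ve{c}_1, \ve{c}_2 \in \mathbb{Z}_q^n$,
\[
S_{\tau, \ell, t}(\ve{c}_1) \cap S_{\tau, \ell, t}(\ve{c}_2) \neq \emptyset \iff d_{\tau, \ell}(\ve{c}_1, \ve{c}_2) \leq 2t .
\]
Once this is shown, the corollary is immediate: $\mathcal{C}$ is $t$-tandem duplication correcting exactly when the spheres of distinct codewords are disjoint, i.e.\ when $d_{\tau,\ell}(\ve{c}_1,\ve{c}_2) > 2t$, which is the same as $d_{\tau,\ell}(\ve{c}_1,\ve{c}_2) \geq 2t+1$ since the distance is integer-valued (indeed even when finite, as seen below). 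To attack the displayed equivalence I first record a description of the sphere in signature coordinates. Using the correspondence from \cite{Jain16} — the root is invariant under duplications, each duplication increases exactly one entry of $\sigma_\ell(\ve{v})$ by one, and every coordinate can be incremented independently — together with the fact that a word is determined by its root and signature, the sphere is
\[
S_{\tau, \ell, t}(\ve{x}) = \{\ve{y} : \rho_\ell(\ve{y}) = \rho_\ell(\ve{x}),\ \sigma_\ell(\ve{v}_y) \geq \sigma_\ell(\ve{v}_x),\ |\sigma_\ell(\ve{v}_y) - \sigma_\ell(\ve{v}_x)|_1 \leq t\},
\]
where $\geq$ is understood entrywise. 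Here I use that $\theta$ duplications raise the total signature weight by exactly $\theta$, so the number of operations needed to reach $\ve{y}$ equals $|\sigma_\ell(\ve{v}_y) - \sigma_\ell(\ve{v}_x)|_1$.

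With this description the equivalence becomes a short $\ell_1$ computation. If $\rho_\ell(\ve{c}_1) \neq \rho_\ell(\ve{c}_2)$ both sides fail, since a common word would have a single root while $d_{\tau,\ell} = \infty$. So assume equal roots and write $\sigma_i = \sigma_\ell(\ve{v}_{c_i})$. For the direction ``$\Leftarrow$'' I take the entrywise maximum $\sigma^* = \max(\sigma_1, \sigma_2)$ as the candidate meeting point; the word with root $\rho_\ell(\ve{c}_1)$ and signature $\sigma^*$ dominates both $\sigma_1$ and $\sigma_2$, hence is reachable from each by duplications alone, and $|\sigma^* - \sigma_1|_1$, $|\sigma^* - \sigma_2|_1$ are precisely the two halves $A = \sum_j \max(0, \sigma_{2,j}-\sigma_{1,j})$ and $B = \sum_j \max(0, \sigma_{1,j}-\sigma_{2,j})$ of $d_{\tau,\ell}(\ve{c}_1,\ve{c}_2) = A + B$. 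For ``$\Rightarrow$'', any common word $\ve{y}$ satisfies $\sigma_\ell(\ve{v}_y) \geq \sigma_1$ and $\sigma_\ell(\ve{v}_y) \geq \sigma_2$, hence $\sigma_\ell(\ve{v}_y) \geq \sigma^*$, so reaching it costs at least $A$ duplications from $\ve{c}_1$ and at least $B$ from $\ve{c}_2$, forcing $A \leq t$, $B \leq t$, and therefore $d_{\tau,\ell} = A+B \leq 2t$.

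The crux — and the step I expect to require the most care — is that both halves can be made $\leq t$ \emph{simultaneously}, which is not automatic from $A + B \leq 2t$ alone. The missing ingredient is that all codewords share the common length $n$: equal length and equal root force $|\sigma_1|_1 = |\sigma_2|_1$ (the two words contain the same number of zeros in $\ve{v}$), and this equality gives $A - B = |\sigma_2|_1 - |\sigma_1|_1 = 0$. Hence $A = B = d_{\tau,\ell}(\ve{c}_1,\ve{c}_2)/2$, so $d_{\tau,\ell} \leq 2t$ is equivalent to $A = B \leq t$, closing both directions. I would therefore make the equal-weight observation explicit at the start of the equal-root case, since it is exactly what prevents the ``duplications only increase the signature'' asymmetry from breaking the argument.
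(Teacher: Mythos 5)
Your proof is correct and follows essentially the route the paper relies on: it states the corollary as a direct consequence of the root/signature correspondence of Jain et al.\ (duplications fix the root $\rho_\ell$ and increment one entry of $\sigma_\ell(\ve{v})$), which reduces sphere intersection to exactly the $\ell_1$ computation you carry out. Your explicit observation that equal length and equal root force $|\sigma_\ell(\ve{v}_1)|_1 = |\sigma_\ell(\ve{v}_2)|_1$, hence $A=B=d_{\tau,\ell}/2$, is precisely the detail the paper leaves implicit by citing the reference instead of proving the claim.
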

	\begin{corollary} \label{cor:tan_del_distance}
		A code $\mathcal{C} \subset \mathbb{Z}_q^n$ is $t$-tandem deletion correcting (length $\ell$) if and only if $d_{\tau, \ell}(\ve{c}_1, \ve{c}_2) \geq 2t+1 \,\forall\, \ve{c}_1, \ve{c}_2 \in \mathcal{C}, \ve{c}_1 \neq \ve{c}_2$.
	\end{corollary}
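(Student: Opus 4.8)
The plan is to derive Corollary~\ref{cor:tan_del_distance} as an immediate consequence of Corollary~\ref{cor:tan_dup_distance}, since both characterize code correctness through the same distance condition $d_{\tau,\ell}(\ve{c}_1,\ve{c}_2) \geq 2t+1$. The key observation is that tandem duplications and tandem deletions act on the zero-signature $\sigma_\ell(\ve{v})$ in perfectly symmetric ways: a duplication increases one entry of $\sigma_\ell(\ve{v})$ by one, while a deletion decreases one entry by one, and neither operation alters the root $\rho_\ell(\ve{x})$. Thus the two operations generate the same notion of reachability up to sign, and I expect the entire distinction between the two corollaries to collapse once this symmetry is made precise.

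First I would unfold the definition of a $t$-tandem deletion correcting code: the condition $S_{\tau^\delta,\ell,t}(\ve{c}_1) \cap S_{\tau^\delta,\ell,t}(\ve{c}_2) \neq \emptyset$ must imply $\ve{c}_1 = \ve{c}_2$. I would then show that $S_{\tau^\delta,\ell,t}(\ve{c}_1) \cap S_{\tau^\delta,\ell,t}(\ve{c}_2) \neq \emptyset$ holds if and only if there exists a word $\ve{y}$ reachable from both $\ve{c}_1$ and $\ve{c}_2$ by at most $t$ tandem deletions each. Since a tandem deletion decreases an entry of $\sigma_\ell$ by one and preserves the root, any $\ve{y}$ in the intersection forces $\rho_\ell(\ve{c}_1) = \rho_\ell(\ve{y}) = \rho_\ell(\ve{c}_2)$, and then $\ve{y}$ corresponds to a common signature obtainable by componentwise decrements. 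The existence of such a $\ve{y}$ is exactly the statement that $\sigma_\ell(\ve{v}_1)$ and $\sigma_\ell(\ve{v}_2)$ agree after subtracting nonnegative vectors of total weight at most $t$ from each, which by the triangle-like additivity of the $l_1$-norm on these integer signatures is equivalent to $|\sigma_\ell(\ve{v}_1) - \sigma_\ell(\ve{v}_2)|_1 \leq 2t$.

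Having established that the deletion-ball intersection condition translates to $d_{\tau,\ell}(\ve{c}_1,\ve{c}_2) \leq 2t$ (with equal roots, and trivially empty intersection when the roots differ), the correcting property becomes: for all distinct codewords the intersection is empty, i.e.\ $d_{\tau,\ell}(\ve{c}_1,\ve{c}_2) \geq 2t+1$. This is precisely the same distance condition appearing in Corollary~\ref{cor:tan_dup_distance}, so the two corollaries share an identical characterization and the equivalence follows. The main obstacle, and the only step requiring genuine care, is the reachability argument: I must verify that whenever $|\sigma_\ell(\ve{v}_1) - \sigma_\ell(\ve{v}_2)|_1 \leq 2t$ with matching roots, one can actually realize a common descendant using at most $t$ deletions on \emph{each} side rather than merely bounding the total. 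This is settled by choosing, for each coordinate, the common value $\min(\sigma_\ell(\ve{v}_1)_j, \sigma_\ell(\ve{v}_2)_j)$ as the target signature; the number of decrements on side one is then $\sum_j (\sigma_\ell(\ve{v}_1)_j - \min(\cdot))_{+}$ and symmetrically for side two, and one checks these two sums are each at most $t$ exactly when the $l_1$-distance is at most $2t$, since their sum equals $|\sigma_\ell(\ve{v}_1) - \sigma_\ell(\ve{v}_2)|_1$ and the larger of the two can always be balanced to not exceed $t$ in the worst split.
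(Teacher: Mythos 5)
Your overall strategy is reasonable and, unlike the paper (which states both corollaries as direct consequences of the results in \cite{Jain16} without writing out a proof), you attempt a self-contained argument via zero-signatures. The reduction of the ball-intersection condition to a condition on $\sigma_\ell(\ve{v}_1)$ and $\sigma_\ell(\ve{v}_2)$ is the right idea, and the direction ``balls intersect $\Rightarrow d_{\tau,\ell}(\ve{c}_1,\ve{c}_2)\le 2t$'' goes through by the triangle inequality exactly as you describe. The converse direction, however, contains a genuine gap. Write $a=\sum_j\max\{\sigma_\ell(\ve{v}_1)_j-\sigma_\ell(\ve{v}_2)_j,0\}$ and $b=\sum_j\max\{\sigma_\ell(\ve{v}_2)_j-\sigma_\ell(\ve{v}_1)_j,0\}$. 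Since any common descendant must have signature at most $\min(\sigma_\ell(\ve{v}_1),\sigma_\ell(\ve{v}_2))$ componentwise, the numbers of deletions required on the two sides are individually bounded below by $a$ and $b$; nothing can be ``balanced'' or redistributed between the sides. Your claim that $a\le t$ and $b\le t$ hold ``exactly when'' $a+b\le 2t$ is false for general nonnegative integers ($a=2t$, $b=0$ is a counterexample), so as written you have not shown that $d_{\tau,\ell}\le 2t$ forces the radius-$t$ deletion balls to intersect --- which is precisely the direction needed to conclude that a $t$-tandem-deletion-correcting code must satisfy $d_{\tau,\ell}\ge 2t+1$.

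The gap is repairable, and the missing ingredient is essentially the remark the paper makes immediately after the two corollaries: both codewords lie in $\mathbb{Z}_q^n$, and $|\ve{x}| = \ell + |\mu_\ell(\ve{v})| + \ell\,|\sigma_\ell(\ve{v})|_1$, so equal roots (identical $\ve{u}$ and identical trunks) together with equal length force $|\sigma_\ell(\ve{v}_1)|_1=|\sigma_\ell(\ve{v}_2)|_1$. This gives $a=b=\tfrac{1}{2}d_{\tau,\ell}(\ve{c}_1,\ve{c}_2)$, so $d_{\tau,\ell}\le 2t$ does imply $a\le t$ and $b\le t$, and the word with signature $\min(\sigma_\ell(\ve{v}_1),\sigma_\ell(\ve{v}_2))$ is then a common descendant reachable by at most $t$ deletions from each codeword (consistently, both descent chains have the same length, as they must since a word in the intersection has a single well-defined length). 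With that observation inserted, your equivalence and hence the corollary are correct; without it, the ``worst split'' sentence does not constitute a proof.
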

	Note that Corollary \ref{cor:tan_dup_distance} and \ref{cor:tan_del_distance} imply that if for a codeword $\ve{c}_1 \in \mathcal{C}$ of a $t$-tandem duplication correcting code $\mathcal{C}$ there exists another codeword $\ve{c}_2 \in \mathcal{C}$ with $\rho_\ell(\ve{c}_1) = \rho_\ell(\ve{c}_2)$, we have $|\sigma_\ell(\ve{c}_1)|_1 = |\sigma_\ell(\ve{c}_2)|_1 > t$. Let us formulate the central theorem of this section which follows from Corollaries \ref{cor:tan_dup_distance} and \ref{cor:tan_del_distance}.
	\begin{theorem} \label{thm:equivalence_tandem_duplication_deletion}
		A code $\mathcal{C} \subset \mathbb{Z}_q^n$ is $t$-tandem duplication correcting if and only if it is $t$-tandem deletion correcting.
	\end{theorem}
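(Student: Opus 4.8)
The plan is to observe that \ref{cor:tan_dup_distance} and \ref{cor:tan_del_distance} characterize the two properties in question by the \emph{same} algebraic condition, so that the equivalence is obtained simply by chaining them. Concretely, the first step is to apply Corollary~\ref{cor:tan_dup_distance}, which states that $\mathcal{C}$ is $t$-tandem duplication correcting if and only if $d_{\tau, \ell}(\ve{c}_1, \ve{c}_2) \geq 2t+1$ holds for every pair of distinct codewords $\ve{c}_1, \ve{c}_2 \in \mathcal{C}$.

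The second step is to invoke Corollary~\ref{cor:tan_del_distance}, which asserts that the identical distance condition $d_{\tau, \ell}(\ve{c}_1, \ve{c}_2) \geq 2t+1$ for all distinct $\ve{c}_1, \ve{c}_2 \in \mathcal{C}$ is also necessary and sufficient for $\mathcal{C}$ to be $t$-tandem deletion correcting. Since both correcting properties are equivalent to one and the same statement about the tandem duplication distance $d_{\tau, \ell}$, transitivity immediately yields the claimed two-sided implication.

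The reason no genuine obstacle remains is that the real work has already been absorbed into the metric $d_{\tau, \ell}$ together with the root and zero-signature representation: a tandem duplication of length $\ell$ increases one entry of $\sigma_\ell(\ve{v})$ by one while leaving the root $\rho_\ell(\ve{x})$ fixed, and a tandem deletion decreases such an entry by one, so both operations act as unit steps in the $l_1$-geometry of $\sigma_\ell$. Had the two corollaries not been available, the main difficulty would instead lie in showing that duplication spheres and deletion spheres, though defined through the different operations $\tau$ and $\tau^\delta$, induce balls of the same radius in this common metric; this is precisely the content inherited from \cite{Jain16}. With those results assumed, the theorem is an immediate consequence, and the only point worth stating carefully is that the equivalence is understood for a fixed duplication length $\ell$, the same $\ell$ entering both correcting notions.
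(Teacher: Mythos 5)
Your argument is exactly the paper's: the theorem is stated as an immediate consequence of Corollaries~\ref{cor:tan_dup_distance} and~\ref{cor:tan_del_distance}, both of which characterize the respective correcting property by the single condition $d_{\tau,\ell}(\ve{c}_1,\ve{c}_2)\geq 2t+1$ for all distinct codewords, so chaining them gives the equivalence. Your additional remarks about the fixed $\ell$ and the unit-step action on $\sigma_\ell$ are consistent with the paper's setup and do not change the argument.
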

	\subsection{Relationship between Palindromic Duplication and Deletion codes}
	For palindromic duplication errors, an equivalence similar to Theorem \ref{thm:equivalence_tandem_duplication_deletion} does not hold. A counter example for $\ell=2, t=1$ that shows that not every palindromic \emph{deletion} correcting code is palindromic \emph{duplication} correcting is presented here.
	\begin{example}
		Let $\mathcal{C} = \{\ve{c}_1, \ve{c}_2\}$ with $\ve{c}_1 = (010101)$ and $\ve{c}_2 = (010011)$. $\mathcal{C}$ is single palindromic \emph{deletion} correcting, since $B^{\rho_2^D}_1(\ve{c}_1) = \{\ve{c}_1\}$ and $B^{\rho_2^D}_1(\ve{c}_2) = \{\ve{c}_2, (0101)\}$ and thus $B^{\rho_2^D}_1(\ve{c}_1) \cap B^{\rho_2^D}_1(\ve{c}_2)= \emptyset$. On the other hand, $\mathcal{C}$ is not single palindromic \emph{duplication} correcting since $B^{\rho_2}_1(\ve{c}_1) \cap B^{\rho_2}_1(\ve{c}_2) = \{(01001101)\}$.
	\end{example}
	The following example illustrates that also not every palindromic \emph{duplication} correcting code is palindromic \emph{deletion} correcting.
	\begin{example}
		Consider the code $\mathcal{C} = \{\ve{c}_1, \ve{c}_2\}$ with $\ve{c}_1 = (011010)$ and $\ve{c}_2 = (011110)$. $\mathcal{C}$ is single palindromic \emph{duplication} correcting, since 
		\begin{align*}
		B^{\rho_2}_1(\ve{c}_1) &= \{\ve{c}_1, (01101010), (01111010), (01100110), (01101100), (01101001)\}, \\ B^{\rho_2}_1(\ve{c}_2) &= \{\ve{c}_2, (01101110), (01111110), (01111001)\},
		\end{align*}
		and thus $B^{\rho_2}_1(\ve{c}_1) \cap B^{\rho_2}_1(\ve{c}_2)= \emptyset$. However, $\mathcal{C}$ is not single palindromic \emph{deletion} correcting since $B^{\rho_2^D}_1(\ve{c}_1) \cap B^{\rho_2^D}_1(\ve{c}_2) = \{(0110)\}$.
	\end{example}
	\section{Sphere Sizes for Tandem and Palindromic Duplications and Deletions} \label{sec:error_spheres_tandem_palindromic_dupliations_deletions}
	In the following we derive the size of the spheres $S_{\epsilon, \ell, t}(\ve{x})$, see \eqref{eq:spheres}, for tandem and palindromic duplication and deletion errors. Note that by the definition of the error sphere \eqref{eq:spheres}, $S_{\epsilon, \ell, t}(\ve{x})$ contains $\ve{x}$, which results in a sphere size that is equal to the number of descendants plus one. For the subsequent two lemmas we denote $\phi_\ell(\ve{x})=(\ve{u}, \ve{v})$, according to the definition from Section \ref{ss:equivalence_tandem_duplication_deletion_codes}.
	\subsection{Tandem Duplication Sphere}
	\begin{lemma}
		\begin{equation*}
		|S_{\tau, \ell, t}(\ve{x})| = \sum_{j=0}^t \binom{wt_{\mathrm{H}}(\ve{v})+j}{j} = \binom{t+wt_{\mathrm{H}}(\ve{v})+1}{t}.
		\end{equation*}
	\end{lemma}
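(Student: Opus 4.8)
The plan is to exploit the correspondence, established in the preceding subsection, between tandem duplications in $\ve{x}$ and increments of the zero-signature $\sigma_\ell(\ve{v})$. Recall that every descendant of $\ve{x}$ shares the same root $\rho_\ell(\ve{x}) = (\ve{u}, \mu_\ell(\ve{v}))$, and that $\ve{v}$ is uniquely determined by its trunk $\mu_\ell(\ve{v})$ together with its zero-signature $\sigma_\ell(\ve{v})$. Hence a descendant of $\ve{x}$ is completely specified by its zero-signature, so it suffices to count the distinct zero-signatures reachable from $\sigma_\ell(\ve{v})$ using at most $t$ duplications.

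First I would set $p = wt_{\mathrm{H}}(\ve{v})$ and note that $\sigma_\ell(\ve{v})$ has exactly $p+1$ entries, one per zero-run surrounding the $p$ nonzero symbols of $\ve{v}$. Since a single tandem duplication increases exactly one of these entries by $1$ and, conversely, any chosen entry can always be incremented (a zero-run can always be extended by $\ell$ further zeros, and a run of residual length $0$ between two nonzero symbols can likewise be created), the signatures reachable with at most $t$ duplications are precisely those of the form $\sigma_\ell(\ve{v}) + (a_0, \dots, a_p)$ with $a_i \geq 0$ and $\sum_{i=0}^p a_i \leq t$. Distinct increment vectors $(a_0, \dots, a_p)$ yield distinct signatures and therefore distinct descendants, so the sphere size equals the number of such vectors.

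It then remains to count the nonnegative integer vectors of length $p+1$ with coordinate sum at most $t$. Partitioning according to the exact number $j = \sum_i a_i$ of applied duplications, the vectors with $\sum_i a_i = j$ number $\binom{p+j}{j}$ by a standard stars-and-bars argument, which summed over $0 \leq j \leq t$ gives the first expression $\sum_{j=0}^t \binom{wt_{\mathrm{H}}(\ve{v})+j}{j}$. The closed form follows from the hockey-stick identity $\sum_{j=0}^t \binom{p+j}{j} = \binom{p+t+1}{t}$.

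I expect the main obstacle to be the second step: verifying rigorously that every increment vector with sum at most $t$ is actually realizable, i.e. that the correspondence from \cite{Jain16} permits incrementing any chosen coordinate at every stage, that at most $t$ duplications can raise the signature sum by at most $t$, and that no distinct increment vectors collapse to the same word. Once this bijection between descendants and bounded-sum increment vectors is pinned down, the enumeration is routine.
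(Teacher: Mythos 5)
Your proposal is correct and follows essentially the same route as the paper: both identify a tandem duplication with incrementing one entry of the zero-signature $\sigma_\ell(\ve{v})$ and count the reachable signatures $\ve{\sigma}$ with $\sigma_i \geq \sigma_\ell(\ve{v})_i$ and $|\ve{\sigma}|_1 - |\sigma_\ell(\ve{v})|_1 \leq t$ via stars-and-bars and the hockey-stick identity. Your write-up merely makes explicit the realizability and injectivity of increment vectors, which the paper leaves implicit.
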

	\begin{proof}
		Recall that a tandem duplication error corresponds to increasing one entry of the zero signature $\sigma_\ell(\ve{v})$ by one. Then, the duplication sphere size equals the number of zero signatures $\ve{\sigma} \in \mathbb{N}_0^{wt_{\mathrm{H}}(\ve{v})+1}$ with $\sigma_i \geq \sigma_\ell(\ve{v})_i$ and $|\ve{\sigma}|_1-|\sigma_\ell(\ve{v})|_1 \leq t$. \qed
	\end{proof}
	\subsection{Tandem Deletion Sphere}
	\begin{lemma}\label{lemma:tandem_deletion_sphere}
		\begin{equation*}
		|S_{\tau^\delta, \ell, 1}(\ve{x})| = wt_{\mathrm{H}}(\sigma_\ell(\ve{v})) + 1.
		\end{equation*}
	\end{lemma}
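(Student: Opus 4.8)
The plan is to count the number of distinct words reachable from $\ve{x}$ by a single tandem deletion, since $|S_{\tau^\delta, \ell, 1}(\ve{x})|$ equals the number of such descendants plus one (for $\ve{x}$ itself). By the framework established in Section~\ref{ss:equivalence_tandem_duplication_deletion_codes}, a tandem deletion of length $\ell$ leaves the root $\rho_\ell(\ve{x})$ unchanged and corresponds to decreasing a single entry of the zero signature $\sigma_\ell(\ve{v})$ by one. Crucially, such a decrement is only admissible at an entry that is strictly positive. Therefore I would argue that the number of applicable single deletions equals the number of positive entries of $\sigma_\ell(\ve{v})$, which is exactly $wt_{\mathrm{H}}(\sigma_\ell(\ve{v}))$.

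First I would recall that each word reachable from $\ve{x}$ (sharing its root) is uniquely determined by its zero signature, so distinct admissible decrements yield distinct descendants. Next I would verify that decrementing the $j$-th entry of $\sigma_\ell(\ve{v})$ is possible precisely when $\sigma_\ell(\ve{v})_j \geq 1$: an entry $\sigma_\ell(\ve{v})_j = \lfloor m_j / \ell \rfloor$ counts how many full blocks of $\ell$ zeros can be removed from the $j$-th zero run, and a tandem deletion requires at least one such block to be present. Thus there are exactly $wt_{\mathrm{H}}(\sigma_\ell(\ve{v}))$ admissible decrements, each producing a distinct word, and adding $1$ for $\ve{x}$ itself gives $|S_{\tau^\delta, \ell, 1}(\ve{x})| = wt_{\mathrm{H}}(\sigma_\ell(\ve{v})) + 1$.

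The main obstacle, and the step requiring the most care, is the injectivity claim: I must confirm that two decrements applied at different positions of the signature cannot collapse to the same descendant. This follows from the fact that a word with fixed root is in bijection with its zero signature, so distinct resulting signatures correspond to distinct words, and decrementing different coordinates of a fixed vector by one always yields different vectors. A secondary subtlety worth stating explicitly is that the admissibility condition on deletions from the Preliminaries — namely that $\tau^\delta_{i,\ell}$ is only defined where $\ve{x}$ has the form $(\ve{uvvw})$ — is exactly captured at the signature level by the requirement $\sigma_\ell(\ve{v})_j \geq 1$, so no spurious or missed deletions arise from the positional description.
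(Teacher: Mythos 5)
Your proposal is correct and follows the same route as the paper: the paper's proof likewise observes that a tandem deletion is admissible exactly at the positions where $\sigma_\ell(\ve{v})_j > 0$ and adds one for $\ve{x}$ itself. You simply spell out the injectivity of the map from decremented signature entries to descendants, which the paper leaves implicit.
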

	\begin{proof}
		It is only possible to delete a tandem duplication at positions, where $\sigma_\ell(\ve{v})_j > 0$. Further, we add one as $S_{\tau^\delta, \ell, 1}(\ve{x})$ contains $\ve{x}$. \qed
	\end{proof}

	\subsection{Palindromic Duplication Sphere}
	The size of the palindromic duplication sphere is not straightforward to derive due to the knotted nature of \cref{eq:cond_j_less_l_a,eq:cond_j_less_l_b,eq:cond_j_less_l_c} and \cref{eq:cond_j_geq_l_a,eq:cond_j_geq_l_b,eq:cond_j_geq_l_c}. We start with deriving the palindromic duplication sphere size for the cases $\ell=1$ and $\ell=2$. For $\ell=1$, a palindromic duplication is a single duplication. Therefore, the sphere size is
	\begin{equation}
	|S_{\pi,1}(\ve{x})| = r(\ve{x})+1,
	\end{equation}
	as duplications in the same run yield the same outcome.
	\begin{lemma}\label{lemma:palindromic_duplication_sphere_l_2}The size of the palindromic duplication sphere $|S_{\pi, 2}(\ve{x})|$ is
		\begin{equation}
		|S_{\pi, 2}(\ve{x})| = n - \sum_{i=3}^n (i-2) r_i(\ve{x}) = 2 r(\boldsymbol{x}) - r_1(\boldsymbol{x}).
		\end{equation}
	\end{lemma}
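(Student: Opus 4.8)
The plan is to count the distinct words produced by a single palindromic duplication and then add one for $\ve{x}$ itself. Since every palindromic duplication of length $\ell=2$ increases the length by two, each image lies in $\mathbb{Z}_q^{n+2}$, whereas $\ve{x}$ has length $n$; hence $\ve{x}$ is automatically distinct from all of its duplications and $|S_{\pi,2}(\ve{x})| = 1 + N$, where $N$ is the number of \emph{distinct} words among $\pi_{0,2}(\ve{x}), \dots, \pi_{n-2,2}(\ve{x})$. There are exactly $n-1$ admissible positions $i \in \{0, \dots, n-2\}$, and $\pi_{i,2}(\ve{x})$ is obtained from $\ve{x}$ by inserting the mirrored pair $(x_{i+2}, x_{i+1})$ directly after position $i+2$; explicitly its $k$-th entry equals $x_k$ for $k \le i+2$, equals $x_{i+2}$ and $x_{i+1}$ for $k=i+3$ and $k=i+4$, and equals $x_{k-2}$ for $k \ge i+5$.

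The heart of the argument is to decide exactly when two positions produce the same word. I claim that for $i < i'$ one has $\pi_{i,2}(\ve{x}) = \pi_{i',2}(\ve{x})$ if and only if $x_{i+1} = x_{i+2} = \cdots = x_{i'+2}$, i.e.\ positions $i+1$ through $i'+2$ all lie in one run. The converse is immediate: if this segment is constant equal to $a$, both duplications merely insert $aa$ into that run and yield the word in which the run is lengthened by two. For the forward direction I would compare the two images on the window of positions $i+1, \dots, i'+4$ (outside this window both agree with the appropriate shift of $\ve{x}$) and read off the resulting equalities among $x_{i+1}, \dots, x_{i'+2}$: the two inserted pairs contribute the adjacencies $x_{i+2}=x_{i+3}$ and $x_{i'+1}=x_{i'+2}$, while the overlap of the two shifted copies of $\ve{x}$ forces $x_{i+1}=x_{i+4}$ together with $x_{j}=x_{j+2}$ for the intermediate indices. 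Chaining these relations collapses the whole segment to a single symbol, the two adjacencies being what bridges the even- and odd-indexed subchains. Verifying that this chain really connects all positions for every gap $d=i'-i$ is the main obstacle; the cases $d=1$ and $d=2$ already display the full mechanism.

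It remains to turn the collision criterion into a count. Call a position $i$ \emph{internal} if $x_{i+1}=x_{i+2}$ and \emph{straddling} otherwise. By the claim, two distinct positions coincide only when both are internal to the \emph{same} run, in which case they necessarily coincide, whereas a straddling position never coincides with any other. A run of length $m$ contains exactly $m-1$ internal positions (none when $m=1$), all producing the single word in which that run is lengthened by two; it therefore contributes one outcome while discarding $m-2$ positions when $m\ge 2$ and none when $m=1$. The $r(\ve{x})-1$ straddling positions, being pairwise non-colliding, discard nothing, so the total number of discarded positions is $\sum_{i\ge 3}(i-2)\,r_i(\ve{x})$. Hence $N = (n-1) - \sum_{i=3}^{n}(i-2)\,r_i(\ve{x})$ and $|S_{\pi,2}(\ve{x})| = N+1 = n - \sum_{i=3}^{n}(i-2)\,r_i(\ve{x})$. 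Finally, using $\sum_i i\,r_i(\ve{x}) = n$ and $\sum_i r_i(\ve{x}) = r(\ve{x})$ one gets $\sum_{i\ge 3}(i-2)\,r_i(\ve{x}) = (n-2r(\ve{x})) + r_1(\ve{x})$, which yields the equivalent closed form $n - \sum_{i\ge 3}(i-2)\,r_i(\ve{x}) = 2r(\ve{x}) - r_1(\ve{x})$ and completes the proof.
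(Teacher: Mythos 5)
Your proof is correct and follows essentially the same route as the paper: count the $n-1$ duplication positions, show that two positions collide exactly when the duplicated symbols lie in one run (your chain of forced equalities is precisely what the paper's appendix conditions \cref{eq:cond_j_less_l_a,eq:cond_j_less_l_b,eq:cond_j_less_l_c} and \cref{eq:cond_j_geq_l_a,eq:cond_j_geq_l_b,eq:cond_j_geq_l_c} reduce to for $\ell=2$), and subtract the $m-2$ wasted positions per run of length $m\ge 3$. Your internal/straddling bookkeeping and the final identity $\sum_{i\ge 3}(i-2)r_i = n-2r(\ve{x})+r_1(\ve{x})$ match the paper's argument, and the chaining step you flag as the remaining obstacle does go through exactly as you describe.
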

	\begin{proof}
		We start with the observation that there are $n-1$ possible positions $i \in \{0,1, ..., n-2\}$ for palindromic duplications. Now, for $\ell=2$, the conditions $\pi_{i,\ell}(\ve{x}) = \pi_{i+j,\ell}(\ve{x})$ \cref{eq:cond_j_less_l_a,eq:cond_j_less_l_b,eq:cond_j_less_l_c} and \cref{eq:cond_j_geq_l_a,eq:cond_j_geq_l_b,eq:cond_j_geq_l_c} become  $x_1 = x_2 = \dots = x_{2+j} \, \forall \, j>0$. We therefore deduce that two palindromic duplications in $\ve{x}$ of length $2$ only result in the same vector $\boldsymbol{y}=\pi_{i,\ell}(\ve{x}) = \pi_{i+j,\ell}(\ve{x})$ iff they appear in the same run in $\ve{x}$. Further, two palindromic duplications at two different positions $i$ and $i+j, j>0$ can only duplicate symbols from the same run, if this run has length at least $3$. Thus, every additional symbol to runs of length at least $2$ does not increase the duplication sphere size and has to be subtracted from the palindromic duplication sphere size. Using $\sum_{i=1}^n i r_i(\ve{x}) = n$ and $\sum_{i=1}^n r_i(\ve{x}) = r(\ve{x})$ yields the statement. \qed
	\end{proof}
	For $\ell \geq 3$ and $j \geq 2$, \cref{eq:cond_j_less_l_a,eq:cond_j_less_l_b,eq:cond_j_less_l_c} and \cref{eq:cond_j_geq_l_a,eq:cond_j_geq_l_b,eq:cond_j_geq_l_c} do not imply $x_1=x_2 = \dots =x_{\ell+j}$. For example, consider $\ell=3$ and the word $\ve{x} = (ACAACA)$. Then, $\pi_{0,3}(\ve{x}) = \pi_{3,3}(\ve{x}) = (ACAACAACA)$. However, it is possible to find an upper bound on the size of the palindromic duplication sphere. For $j = 1$, \cref{eq:cond_j_less_l_a,eq:cond_j_less_l_b,eq:cond_j_less_l_c} become $x_1 = x_2 = \dots = x_{\ell+1}$. Therefore two neighboring palindromic duplications can only result in the same word if they appear in one run.
	\begin{lemma}
		\begin{equation*}
		|S_{\pi, \ell}(\ve{x})| \leq n - \ell + 2 - \sum_{i=\ell+1}^{n} (i-\ell)r_i(\ve{x})
		\end{equation*}
	\end{lemma}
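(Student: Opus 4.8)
The plan is to bound the sphere size by a naive count of duplication positions, and then subtract a provable lower bound on the number of coincidences among duplications applied at distinct positions. There are exactly $n-\ell+1$ admissible positions $i \in \{0,1,\dots,n-\ell\}$ for a single palindromic duplication, and since each such duplication increases the length from $n$ to $n+\ell$, none of the resulting words equals $\ve{x}$. Hence, before accounting for collisions, $|S_{\pi,\ell}(\ve{x})| \le (n-\ell+1)+1 = n-\ell+2$, where the extra $+1$ counts $\ve{x}$ itself. It then remains to show that at least $\sum_{i=\ell+1}^{n}(i-\ell)r_i(\ve{x})$ of the $n-\ell+1$ outcomes collapse onto each other.

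First I would restrict attention to \emph{neighboring} coincidences, i.e.\ pairs $(j,j+1)$ with $\pi_{j,\ell}(\ve{x})=\pi_{j+1,\ell}(\ve{x})$, since these already suffice. As noted just before the lemma, the $j=1$ instance of the coincidence conditions shows that $\pi_{j,\ell}(\ve{x})=\pi_{j+1,\ell}(\ve{x})$ holds precisely when $x_{j+1}=x_{j+2}=\dots=x_{j+\ell+1}$, that is, when positions $j+1,\dots,j+\ell+1$ form a block of $\ell+1$ equal symbols. Every such block lies inside a maximal run of length at least $\ell+1$, and a run of length $i$ contains exactly $i-(\ell+1)+1 = i-\ell$ windows of $\ell+1$ consecutive equal symbols. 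Summing over all runs, the total number of neighboring coincidences is therefore $\sum_{i=\ell+1}^{n}(i-\ell)r_i(\ve{x})$.

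To turn this count of coincident pairs into a reduction of the sphere size, I would introduce the graph $G$ on the vertex set $\{0,1,\dots,n-\ell\}$ of duplication positions, joining $j$ and $j+1$ by an edge whenever $\pi_{j,\ell}(\ve{x})=\pi_{j+1,\ell}(\ve{x})$. Positions in the same connected component of $G$ produce identical words, so the number of distinct duplication outcomes is at most the number of connected components of $G$. Since $G$ is a subgraph of a simple path, it is a forest, and thus its number of components equals $(n-\ell+1)-|E(G)| = (n-\ell+1)-\sum_{i=\ell+1}^{n}(i-\ell)r_i(\ve{x})$. Adding $1$ for the word $\ve{x}$ yields the claimed inequality.

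The main obstacle is the bookkeeping in this last step: one must be sure that the neighboring coincidences can be summed without overcounting, even when three or more consecutive positions mutually coincide. The graph-theoretic framing is exactly what resolves this, because for a subpath forest the identity (components) $=$ (vertices) $-$ (edges) holds on the nose, so long chains of equal neighbors are handled automatically. Finally, any additional non-neighboring coincidences, such as $\pi_{0,3}(\ve{x})=\pi_{3,3}(\ve{x})$ for $\ve{x}=(ACAACA)$, only introduce further edges and merge components, which can only \emph{decrease} the number of distinct outcomes; hence they preserve the upper bound and need not be tracked explicitly.
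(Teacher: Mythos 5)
Your proof is correct and follows essentially the same route as the paper's: count the $n-\ell+1$ duplication positions (plus one for $\ve{x}$ itself) and subtract one collision for each of the $i-\ell$ windows of $\ell+1$ equal consecutive symbols inside every run of length $i\geq\ell+1$, using the $j=1$ coincidence condition. Your graph-theoretic bookkeeping (components $=$ vertices $-$ edges for a subforest of a path) just makes explicit the subtraction step that the paper's proof states more tersely, and correctly notes that further non-neighboring coincidences only help the bound.
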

	\begin{proof}
		There are $n-\ell+1$ possible positions for palindromic duplications of length $\ell$. Now, as seen before, duplications in the same run result in the same descendant. We therefore subtract the additional $i-\ell$ entries of runs with length at least $\ell+1$ from the number of possible positions for duplications to obtain an upper bound on the duplication sphere.  \qed
	\end{proof}
	\subsection{Palindromic Deletion Sphere}
	Similar to the previous section, we start with deriving the size of the palindromic deletions spheres for $\ell=1$ and $\ell=2$. For $\ell=1$, a palindromic deletion is a de-duplication of one symbol. Therefore, the size of the error sphere becomes
	\begin{equation}
	|S_{\pi^\delta, 1}(\ve{x})| = r_{\geq 2}(\ve{x}) + 1,
	\end{equation}
	where $r_{\geq 2}(\ve{x})$ is the number of runs of length at least $2$. Further, we derive the following lemma for binary words.
	\begin{lemma} \label{lemma:palindromic_deletion_sphere_l_2}
		The size of the palindromic deletion sphere $|S_{\pi^\delta, 2}(\ve{x})|$ for $q=2$ is
		\begin{equation*}
		|S_{\pi^\delta, 2}(\ve{x})| = r_{\mathcal{I},2}(\ve{x}) + r_{\geq 4}(\ve{x}) + 1,
		\end{equation*}
		where $r_{\mathcal{I},2}(\ve{x})$ is the number of runs of length $2$, that are located at the interior of $\ve{x}$, i.e., between $x_2$ and $x_{n-1}$. Further, $r_{\geq 4}(\ve{x})$ denotes the number of runs of length at least $4$ in $\ve{x}$.
	\end{lemma}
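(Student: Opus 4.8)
The plan is to enumerate, in the binary case, all length-$4$ palindromes $\ve{v}\ve{v}^\mathrm{R}$ with $\ve{v} = (v_1,v_2)$ that can be removed by a single palindromic deletion of length $\ell = 2$, and to group the resulting descendants according to the run structure of $\ve{x}$. Over $\mathbb{Z}_2$ there are exactly four such patterns, namely $v_1 v_2 v_2 v_1 \in \{0000, 1111, 0110, 1001\}$. The \emph{monochromatic} patterns $0000$ and $1111$ occur precisely inside a run of length at least $4$, and every deletion performed inside one fixed run of length $m \geq 4$ yields the same word, namely $\ve{x}$ with that run shortened to length $m-2$; hence these patterns contribute exactly $r_{\geq 4}(\ve{x})$ candidate descendants, one per run of length at least $4$. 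The \emph{bichromatic} patterns $0110$ and $1001$ require a run of length exactly $2$ flanked by the opposite symbol on both sides; over $\mathbb{Z}_2$ this is equivalent to the run lying in the interior of $\ve{x}$, since a length-$2$ run touching $x_1$ or $x_n$ lacks one flanking symbol. Each interior length-$2$ run therefore contributes exactly one further candidate descendant, giving the term $r_{\mathcal{I},2}(\ve{x})$, while runs of length $1$ or $3$ admit no pattern at all and are accounted for by neither term.

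It then remains to show that the $r_{\geq 4}(\ve{x}) + r_{\mathcal{I},2}(\ve{x})$ descendants listed above, together with $\ve{x}$ itself, are pairwise distinct; the extra $+1$ is immediate since every descendant has length $n-2 \neq n$. A quick separation comes from the Hamming weight: a monochromatic deletion in a $0$-run leaves $wt_{\mathrm{H}}$ unchanged, a monochromatic deletion in a $1$-run decreases it by $2$, and a bichromatic deletion removes one $0$ and one $1$ and thus decreases it by exactly $1$, so the three families occupy three distinct weight classes. Within a monochromatic family, shortening two different runs alters the run-length profile of $\ve{x}$ in two different coordinates, so the resulting words differ.

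The main obstacle is distinctness within the bichromatic family, because deleting the pair $(v_2, v_1)$ at the right end of an interior length-$2$ run may merge neighbouring runs when the adjacent run has length $1$, so a naive ``first differing run'' comparison can fail. I will instead use the elementary positional deletion-coincidence criterion: deleting the two symbols at positions $j, j+1$ and at $j', j'+1$ with $j < j'$ yields the same word if and only if $x_p = x_{p+2}$ for all $p \in [j, j'-1]$, i.e.\ $\ve{x}$ is $2$-periodic on that range. For two distinct interior length-$2$ runs $k < k'$ the corresponding deletions start at $j = a_k + 1$ and $j' = a_{k'} + 1$, where $a_k$ denotes the starting position of the $k$-th run. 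Evaluating the criterion at $p = a_{k'}$ forces $x_{a_{k'}} = x_{a_{k'}+2}$; but $x_{a_{k'}}$ is the symbol of run $k'$ while $x_{a_{k'}+2}$ already belongs to run $k'+1$ and is therefore the opposite symbol, precisely because run $k'$ has length exactly $2$. The criterion thus fails and the two descendants differ. Collecting the three families together with $\ve{x}$ yields $|S_{\pi^\delta, 2}(\ve{x})| = r_{\mathcal{I},2}(\ve{x}) + r_{\geq 4}(\ve{x}) + 1$.
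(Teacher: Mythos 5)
Your proof is correct and rests on the same decomposition as the paper's: enumerate the four binary palindrome patterns $(0000)$, $(1111)$, $(0110)$, $(1001)$, note that the monochromatic ones occur exactly once per run of length at least $4$ and the bichromatic ones exactly once per interior run of length $2$ (with runs of length $1$, $3$, and boundary runs of length $2$ contributing nothing), and add one for $\ve{x}$ itself. Where you genuinely diverge is in certifying that the listed descendants are pairwise distinct. The paper compresses this into one sentence by appealing to the same-run coincidence claim from the proof of Lemma~\ref{lemma:palindromic_duplication_sphere_l_2}, which ultimately rests on the coincidence conditions tabulated in the appendix; you instead give a self-contained three-step separation: Hamming weight splits the $0$-run-monochromatic, $1$-run-monochromatic and bichromatic families into the weight classes $wt_{\mathrm{H}}(\ve{x})$, $wt_{\mathrm{H}}(\ve{x})-2$ and $wt_{\mathrm{H}}(\ve{x})-1$; the run-length profile separates descendants within a monochromatic family; and the elementary $2$-periodicity criterion for coinciding block deletions separates the bichromatic ones. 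That criterion and its application are sound --- evaluating it at $p=a_{k'}$ and using that run $k'$ has length exactly $2$ correctly produces the contradiction $x_{a_{k'}}=x_{a_{k'}+2}$, and the interiority of run $k'$ guarantees position $a_{k'}+2$ exists. Your route buys independence from the appendix conditions (and sidesteps the paper's slightly loose transfer of a statement about duplications to deletions), at the cost of a longer case analysis; the counting content is otherwise identical.
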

	\begin{proof}
		There are $4$ possible patterns $(0000)$, $(1111)$, $(0110)$, $(1001)$, at which palindromic deletions of length $2$ can occur. Recall that, as we have seen in the proof of Lemma \ref{lemma:palindromic_duplication_sphere_l_2}, two palindromic deletions of length $2$ at two distinct positions in a word $\ve{x}$ can only results in the same outcome, if they appear in the same run. Every run of length at least $4$ contains one of the patterns $(0000)$, $(1111)$ and therefore will contribute one element to the palindromic deletion sphere. The patterns $(0110)$, $(1001)$ contain a run of length exactly $2$, that is located in the interior of $\ve{x}$, such that there is at least one symbol to the left and right of the run. Thus, every run of length $2$, that is located in the interior of $\ve{x}$ also contributes one unique element in the palindromic deletion sphere. Therefore, counting also the element $\ve{x}$, which is contained in $|S_{\pi^\delta, 2}(\ve{x})|$, the total size of the deletion sphere is $r_{\mathcal{I},2}(\ve{x}) + r_{\geq4}(\ve{x})+1$.  \qed
		
	\end{proof}
	Let us define the matrix $\ve{A}^\pi_\ell(\ve{x}) \in \mathbb{Z}_q^{\ell\times n-2\ell+1}$ to be
	\begin{equation}
	\ve{A}^\pi_\ell(\ve{x}) = \begin{bmatrix}
	x_{2\ell}-x_1 & x_{2\ell+1}-x_2 & \dots & x_n-x_{n-2\ell+1} \\
	x_{2\ell-1}-x_2 & x_{2\ell}-x_3 & \dots & x_{n-1}-x_{n-2\ell} \\
	\vdots & \vdots &\ddots & \vdots \\
	x_{\ell+1}-x_\ell & x_{\ell+2}-x_{\ell+1} & \dots & x_{n-\ell+1}-x_{n-\ell}
	\end{bmatrix}.
	\end{equation}
	With this definition it is directly possible to establish the following upper bound on the size of the palindromic deletion spheres for arbitrary deletion length $\ell$.
	\begin{lemma} \label{lemma:upper_bound_deletion_sphere}
		The palindromic deletion sphere $|S_{\pi^\delta, \ell}(\ve{x})|$ is upper bounded by
		\begin{equation*}
		|S_{\pi^\delta, \ell}(\ve{x})| \leq r^{(0)} \left(\ve{A}^\pi_\ell(\ve{x})\right)+1,
		\end{equation*}
		where $r^{(0)} \left(\ve{A}^\pi_\ell(\ve{x})\right)$ is the number of runs of all zero columns in $\left(\ve{A}^\pi_\ell(\ve{x})\right)$.
	\end{lemma}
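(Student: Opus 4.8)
The plan is to reduce the computation of $|S_{\pi^\delta,\ell}(\ve{x})|$ to a count of the all-zero columns of $\ve{A}^\pi_\ell(\ve{x})$, and then to collapse these columns into runs exactly as the ``same run'' argument did in the proofs of Lemmas~\ref{lemma:palindromic_duplication_sphere_l_2} and~\ref{lemma:palindromic_deletion_sphere_l_2}. Since $S_{\pi^\delta,\ell}(\ve{x})$ contains $\ve{x}$ together with all of its one-step palindromic deletion descendants, it suffices to bound the number of distinct descendants by $r^{(0)}(\ve{A}^\pi_\ell(\ve{x}))$ and then add one for $\ve{x}$.

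First I would make the set of admissible deletion positions explicit. A deletion $\pi^\delta_{i,\ell}$ is defined only when $\ve{x}=(\ve{uv}\ve{v}^\mathrm{R}\ve{w})$ with $|\ve{u}|=i$, that is, when $x_{i+\ell+k}=x_{i+\ell+1-k}$ for all $k=1,\dots,\ell$. Reading off the definition of $\ve{A}^\pi_\ell(\ve{x})$, its $(i+1)$-th column collects precisely the differences $x_{i+\ell+k}-x_{i+\ell+1-k}$, up to a permutation of its rows. Hence a palindromic deletion is admissible at position $i$ if and only if the $(i+1)$-th column of $\ve{A}^\pi_\ell(\ve{x})$ is the all-zero vector, and the number of admissible positions equals the number of all-zero columns.

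The core step is to show that admissible deletions whose columns belong to one maximal run of consecutive all-zero columns all yield the same descendant. Comparing the descendants of two neighbouring admissible positions $i=c-1$ and $i=c$ symbol by symbol shows they can differ in at most one coordinate, namely whether it reads $x_{c+\ell}$ or $x_{c+2\ell}$. For columns $c$ and $c+1$ both zero, the blocks $x_c\cdots x_{c+2\ell-1}$ and $x_{c+1}\cdots x_{c+2\ell}$ are overlapping even-length palindromes; composing their two reflections forces $x_p=x_{p+2}$ across the overlap, while the innermost pairing $x_{c+\ell-1}=x_{c+\ell}$ bridges the two parity classes, so that $x_c=\cdots=x_{c+2\ell}$ and in particular $x_{c+\ell}=x_{c+2\ell}$. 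By transitivity the same descendant results throughout a maximal run, so the map sending each maximal run of all-zero columns to that common descendant is well defined and surjective onto the set of descendants.

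Combining these steps, the number of distinct descendants is at most $r^{(0)}(\ve{A}^\pi_\ell(\ve{x}))$, and adding $\ve{x}$ gives the stated inequality. I expect the main obstacle to be the core step, specifically resolving the parity subtlety between offsets $\ell$ and $2\ell$: the translation generated by composing the two reflections only links positions of equal parity, and it is the adjacent innermost pairing that must be invoked to force the whole window to be constant. Equality cannot be expected in general, because descendants coming from columns in different runs may still coincide, as the earlier length-$3$ example for the analogous duplication operation already indicates.
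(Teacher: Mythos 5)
Your proposal is correct and follows essentially the same route as the paper's proof: identify admissible deletion positions with all-zero columns of $\ve{A}^\pi_\ell(\ve{x})$, show that two neighbouring zero columns force $x_c=\dots=x_{c+2\ell}$ so that deletions within one maximal run of zero columns coincide, and count one descendant per run plus $\ve{x}$ itself. You in fact supply more detail than the paper, which dismisses the overlapping-palindrome/parity step with ``it can be shown''; your composed-reflections argument fills exactly that gap.
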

	\begin{proof}
		Clearly, a palindrome of length $\ell$ in the word $\ve{x}$ corresponds to a zero column in the matrix $\ve{A}_\ell^\pi(\ve{x})$. Therefore palindromic deletions are only possible at positions $i$, where $\ve{A}_\ell^\pi(\ve{x})$ has a zero-column. Further, it can be shown that two neighboring zero columns are only possible if $x_{i+1} = x_{i+2} = \dots = x_{i+2\ell+1}$, i.e. for a run of length $2\ell+1$. However, two palindromic deletions inside the same run result in the same words. Therefore, every run of all zero columns in $\left(\ve{A}^\pi_\ell(\ve{x})\right)$ contributes one unique element to $S_{\pi^\delta, \ell}(\ve{x})$. \qed
	\end{proof}
	\section{Bounds on the Code Size} \label{sec:code_size_bounds}
	In this section, we derive bounds for tandem and palindromic \emph{deletion} correcting codes. The bound for tandem deletion correcting codes also provides a bound for the duplication correcting codes, since every tandem duplication correcting code is a deletion correcting code, as we have shown in Section \ref{sec:relationship_duplication_deletion_codes}. Note that the bound for palindromic deletion correcting codes is a first step towards understanding palindromic duplication errors. For single error correcting codes, we deduce the following theorem from \cite{Fazeli15}.
	\begin{theorem}
		The maximum cardinality $|C^*(n,\ell)|$ of tandem duplication (palindromic deletion) correcting codes of length $n$ satisfies \cite{Fazeli15}
		\begin{equation} \label{eq:gspb}
		|C^*(n,\ell)| \leq \sum_{\ve{x} \in \mathbb{Z}_q^n \cup \mathbb{Z}_q^{n-\ell}} \frac{1}{|S_{\epsilon, \ell}(\ve{x})|},
		\end{equation}
		where $\epsilon=\tau^\delta$ for tandem duplication and $\epsilon=\pi^\delta$ for palindromic deletion correcting codes.
	\end{theorem}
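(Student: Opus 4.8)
The plan is to instantiate the Fazeli--Vardy--Yaakobi framework \cite{Fazeli15} for this concrete channel and then invoke their generalized sphere-packing bound. First I would build the associated hypergraph $H=(V,E)$ whose vertex set $V=\mathbb{Z}_q^n\cup\mathbb{Z}_q^{n-\ell}$ collects all words that can be \emph{observed} at the channel output under at most one error: either no error, keeping the length $n$, or one deletion, reducing the length to $n-\ell$. To each word $\ve{x}\in V$ I associate its single-error sphere $S_{\epsilon,\ell}(\ve{x})$ as a hyperedge. The decisive structural fact is that this sphere relation is \emph{symmetric}: a length-$n$ word $\ve{x}$ and a length-$(n-\ell)$ word $\ve{y}$ lie in a common sphere exactly when $\ve{y}$ is obtained from $\ve{x}$ by one tandem (palindromic) deletion, equivalently when $\ve{x}$ is obtained from $\ve{y}$ by the inverse tandem (palindromic) duplication. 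Hence on the short words the hyperedge $S_{\epsilon,\ell}(\ve{y})$ is read as the corresponding single-\emph{duplication} sphere, and the sum in the statement is genuinely taken over all of $V$.

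Next I would translate the code condition into a packing condition. By the definition of a single-$\epsilon$ correcting code, $S_{\epsilon,\ell}(\ve{c}_1)\cap S_{\epsilon,\ell}(\ve{c}_2)=\emptyset$ for distinct codewords, so a code is precisely a set of pairwise disjoint hyperedges, i.e.\ a matching in $H$ using length-$n$ centers. Therefore $|C^*(n,\ell)|\le\nu(H)$, the matching number, and the standard linear-programming relaxation together with LP duality yields $\nu(H)\le\nu^*(H)=\tau^*(H)$, the fractional transversal number; this is exactly the inequality underlying \cite{Kulkarni13,Fazeli15}. For the tandem case I would appeal to Corollaries~\ref{cor:tan_dup_distance} and \ref{cor:tan_del_distance} and Theorem~\ref{thm:equivalence_tandem_duplication_deletion} to certify that the deletion spheres, whose sizes are given by Lemma~\ref{lemma:tandem_deletion_sphere}, are the correct objects; for the palindromic case I would rely only on the fact that $\pi^\delta_{i,\ell}$ and $\pi_{i,\ell}$ are mutually inverse operations, which still delivers the required symmetry even though the duplication and deletion \emph{codes} need not coincide.

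Finally I would apply the generalized sphere-packing bound of \cite{Fazeli15}, which controls $\tau^*(H)$ by the average sphere-packing value $\sum_{\ve{x}\in V}1/|S_{\epsilon,\ell}(\ve{x})|$ under the symmetry that $H$ enjoys. Chaining $|C^*(n,\ell)|\le\nu(H)\le\tau^*(H)\le\sum_{\ve{x}\in V}1/|S_{\epsilon,\ell}(\ve{x})|$ then gives \eqref{eq:gspb}.

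The main obstacle is precisely this last step. Unlike the elementary sphere-packing bound, the spheres here are highly \emph{non-uniform} in size; indeed their dependence on $\ve{x}$ is the whole point of Section~\ref{sec:error_spheres_tandem_palindromic_dupliations_deletions}, so one cannot simply divide the ambient size by a common sphere size. The content that must be extracted from \cite{Fazeli15} is that the symmetry of $H$ renders the weighting $\ve{x}\mapsto1/|S_{\epsilon,\ell}(\ve{x})|$ (or the appropriate averaged variant) admissible as a fractional transversal, so that the average sphere-packing value remains a valid upper bound despite the non-regularity. Verifying that our merged deletion/duplication hypergraph meets exactly the symmetry hypothesis of their theorem, and that no feasibility is lost when the two length classes $\mathbb{Z}_q^n$ and $\mathbb{Z}_q^{n-\ell}$ are combined into the single vertex set $V$, is the delicate part of the argument.
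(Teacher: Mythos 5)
Your skeleton --- hypergraph on $V=\mathbb{Z}_q^n\cup\mathbb{Z}_q^{n-\ell}$ with hyperedges $S_{\epsilon,\ell}(\ve{x})$, a code as a matching, $|C^*|\le\nu(H)\le\tau^*(H)$ by LP duality, and then an explicit fractional transversal --- is exactly the framework the paper imports from \cite{Fazeli15} (the paper itself gives no proof, only the citation). The problem is the last link in your chain, which you correctly identify as ``the delicate part'' but then point at the wrong sufficient condition. The average-sphere-packing-value theorem of \cite{Fazeli15} that you invoke requires the hypergraph to be \emph{regular} as well as symmetric, and regularity manifestly fails here: the whole point of Section~\ref{sec:error_spheres_tandem_palindromic_dupliations_deletions} is that $|S_{\epsilon,\ell}(\ve{x})|$ varies with $\ve{x}$. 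The hypothesis that actually makes the weighting $w(\ve{y})=1/|S_{\epsilon,\ell}(\ve{y})|$ a feasible fractional transversal is \emph{monotonicity} of the deletion spheres: if $\ve{y}\in S_{\epsilon,\ell}(\ve{x})$ then $|S_{\epsilon,\ell}(\ve{y})|\le|S_{\epsilon,\ell}(\ve{x})|$, whence
\begin{equation*}
\sum_{\ve{y}\in S_{\epsilon,\ell}(\ve{x})} \frac{1}{|S_{\epsilon,\ell}(\ve{y})|}\;\ge\;\sum_{\ve{y}\in S_{\epsilon,\ell}(\ve{x})} \frac{1}{|S_{\epsilon,\ell}(\ve{x})|}\;=\;1
\end{equation*}
for every potential codeword $\ve{x}\in\mathbb{Z}_q^n$, and summing the weights over $V$ gives \eqref{eq:gspb}. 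For $\epsilon=\tau^\delta$ this monotonicity follows from Lemma~\ref{lemma:tandem_deletion_sphere}: a tandem deletion decrements one entry of $\sigma_\ell(\ve{v})$, which cannot increase $wt_{\mathrm{H}}(\sigma_\ell(\ve{v}))$. For $\epsilon=\pi^\delta$ an analogous check is needed and is not automatic; nothing in your appeal to the inverse relationship between $\pi_{i,\ell}$ and $\pi^\delta_{i,\ell}$ supplies it. Without this verification the inequality $\tau^*(H)\le\sum_{\ve{x}\in V}1/|S_{\epsilon,\ell}(\ve{x})|$ is unsupported.

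A secondary misreading: you propose to interpret the hyperedge attached to a short word $\ve{y}\in\mathbb{Z}_q^{n-\ell}$ as its \emph{duplication} sphere. The bound \eqref{eq:gspb} uses $\epsilon\in\{\tau^\delta,\pi^\delta\}$ uniformly, so the weight of $\ve{y}$ is the reciprocal of its own \emph{deletion} sphere size (its descendants in $\mathbb{Z}_q^{n-2\ell}$), not the number of length-$n$ ancestors. These are different quantities in general, and the monotonicity argument above is stated for the former. Only the length-$n$ words index hyperedges that must be covered; the length-$(n-\ell)$ words enter solely as vertices carrying weight.
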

	Note that it is necessary to formulate the fractional transversal sum over all words of length $n$ and $n-\ell$, since the spheres $S_{\epsilon, \ell}(\ve{x}), \ve{x} \in \mathbb{Z}_q^n$ contain words both of length $n$ and $n-\ell$. The bound \eqref{eq:gspb} can be rephrased to
	\begin{equation} \label{eq:gspb2}
	|C^*(n,\ell)| \leq \sum_{i=0}^{i_{\max}} \frac{N_\epsilon(n,\ell,i)+N_\epsilon(n-\ell,\ell,i)}{i+1},
	\end{equation}
	where $N_{\epsilon}(n,\ell,i) = |\{\ve{x} \in \mathbb{Z}_2^n : |S_{\epsilon, \ell}(\ve{x})| = i+1 \}|$ is the number of words of length $n$ with sphere size $i+1$ and $i_{\max}$ is the maximum sphere size. The next sections are directed towards finding $N_{\epsilon}(n,\ell,i)$ for our error models.
	\subsection{Bound for Tandem Deletions}
	Similar to the strategy in \cite{Fazeli15}, we have to compute the number of words of length $n$ with sphere size $i+1$.
	\begin{lemma}
		\begin{align*}
		N_{\tau^\delta}(n,\ell,i) &= |\{\ve{x} \in \mathbb{Z}_q^n : |S_{\tau^\delta, \ell}(\ve{x})| = i+1 \}| = \\
		&= \sum_{\nu=i}^{\left\lfloor\frac{n}{\ell}\right\rfloor-1}  \sum_{\omega=i-1}^{n-(\nu+1) \ell} q^\ell A(n - (\nu+1) \ell,\ell-1,\omega) \binom{\omega+1}{i} \binom{\nu-1}{i-1},
		\end{align*}
		where $A(n',l',\omega)$ is the number of all words $\ve{x} \in \mathbb{Z}_q^{n'}$ that have zero-runs of length at most $\ell'$ and Hamming weight $\omega$. 
	\end{lemma}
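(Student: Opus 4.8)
The plan is to translate the condition ``sphere size equals $i+1$'' into a statement about the $\ell$-step derivative and then count by a bijective decomposition. By Lemma~\ref{lemma:tandem_deletion_sphere}, $|S_{\tau^\delta,\ell}(\ve{x})| = i+1$ holds exactly when $wt_{\mathrm{H}}(\sigma_\ell(\ve{v})) = i$, where $\phi_\ell(\ve{x}) = (\ve{u}, \ve{v})$. Since the recursion $x_{\ell+k} = v_k + x_k$ reconstructs $\ve{x}$ from $(\ve{u}, \ve{v})$, the map $\phi_\ell$ is a bijection between $\mathbb{Z}_q^n$ and $\mathbb{Z}_q^\ell \times \mathbb{Z}_q^{n-\ell}$, and the sphere size depends only on $\ve{v}$. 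Hence the count factors as $q^\ell$ (the free choice of $\ve{u} \in \mathbb{Z}_q^\ell$) times the number of admissible words $\ve{v} \in \mathbb{Z}_q^{n-\ell}$. This already isolates the factor $q^\ell$ and reduces the task to counting words $\ve{v}$ of length $n-\ell$ whose $\ell$-zero signature has exactly $i$ nonzero entries.

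To count such $\ve{v}$, I would use that $\ve{v}$ is uniquely determined by the pair $(\mu_\ell(\ve{v}), \sigma_\ell(\ve{v}))$, as noted after the trunk definition. I introduce two parameters: $\nu = |\sigma_\ell(\ve{v})|_1$ (the total signature weight) and $\omega = wt_{\mathrm{H}}(\ve{v})$ (the number of nonzero symbols, which equals $p$ so that $\sigma_\ell(\ve{v})$ has length $\omega+1$). Reconstructing $\ve{v}$ from its trunk amounts to inserting $\ell\,\sigma_\ell(\ve{v})_j$ zeros into the $j$-th gap; this leaves the Hamming weight equal to $\omega$ and the trunk unchanged, and it shows that the trunk has length $(n-\ell) - \ell\nu = n-(\nu+1)\ell$ with all zero runs of length at most $\ell-1$. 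The number of such trunks is therefore exactly $A(n-(\nu+1)\ell, \ell-1, \omega)$.

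Given a fixed trunk of weight $\omega$, the admissible signatures are the length-$(\omega+1)$ nonnegative integer vectors having exactly $i$ nonzero entries (so that $wt_{\mathrm{H}}(\sigma_\ell(\ve{v})) = i$) that sum to $\nu$. Choosing which $i$ of the $\omega+1$ positions carry a nonzero entry gives $\binom{\omega+1}{i}$, and distributing the weight $\nu$ over these $i$ positive entries is a composition of $\nu$ into $i$ positive parts, contributing $\binom{\nu-1}{i-1}$. Multiplying the three factors and summing over the admissible $\nu$ and $\omega$ yields the claimed formula, the one-to-one correspondence between $\ve{v}$ and the data (trunk, positions of nonzero signature entries, composition of $\nu$) ensuring that no word is counted twice. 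I would then check the summation limits: $i$ positive parts force $\nu \geq i$, nonnegativity of the trunk length gives $\nu \leq \lfloor n/\ell\rfloor - 1$, the constraint $\omega+1 \geq i$ gives $\omega \geq i-1$, and the trunk weight cannot exceed its length, giving $\omega \leq n-(\nu+1)\ell$.

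The main obstacle is verifying that $\ve{v} \mapsto (\mu_\ell(\ve{v}), \sigma_\ell(\ve{v}))$ restricts to a genuine bijection onto (trunk, signature) pairs with the prescribed parameters: one must confirm that inserting blocks of $\ell$ zeros into the gaps of any admissible trunk neither changes the Hamming weight $\omega$ nor alters the trunk itself (since $m_j \bmod \ell$ is preserved), and conversely that every admissible $\ve{v}$ arises this way. Once this correspondence is secured, the remaining work is the routine binomial bookkeeping described above.
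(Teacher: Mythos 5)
Your proposal is correct and follows essentially the same route as the paper: reduce to $wt_{\mathrm{H}}(\sigma_\ell(\ve{v}))=i$ via the deletion-sphere lemma, factor out $q^\ell$ for $\ve{u}$, and count $\ve{v}$ through the trunk/signature decomposition with the factors $A(n-(\nu+1)\ell,\ell-1,\omega)$, $\binom{\omega+1}{i}$, and $\binom{\nu-1}{i-1}$. Your additional care in verifying the bijection and the summation limits is a welcome refinement but does not change the argument.
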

	\begin{proof}
		We consider the $\ell-$step derivative $\phi_\ell(\ve{x}) = (\ve{u}, \ve{v})$. According to Lemma~\ref{lemma:tandem_deletion_sphere}, the size of the tandem deletion sphere is given by $|S_{\tau^\delta, \ell}(\ve{x})| = wt_\mathrm{H}(\sigma_\ell(\ve{v})) + 1$ and we therefore want to find the number of words $\ve{x} \in \mathbb{Z}_q^n$ with $wt_\mathrm{H}(\sigma_\ell(\ve{v})) = i$.
		
		Let $\nu$ be the number of length $\ell$ tandem duplications in $\ve{x}$, i.e. $|\sigma_\ell(\ve{v})|_1 = \nu$. Further let $\mathcal{J}$ denote the support set of $\sigma_\ell(\ve{v})$, i.e. $\mathcal{J} = \{m : \sigma(\ve{v})_m \neq 0\}$, with $|\mathcal{J}| = i$. The number of possibilities to distribute the duplications into $\sigma_\ell(\ve{v})$ for a given support $\mathcal{J}$ is equal to the number of solutions of
		\begin{equation}
		\sum_{j=1}^{i}y_j = \nu, \quad y_j \in \mathbb{N}, \,\,\forall \,\, 1\leq j \leq i.
		\end{equation}
		This number is given by $\binom{\nu-1}{i-1}$ \cite[Lemma 2.2]{Kulkarni13}. Further, let $\omega$ be the Hamming weight of the trunk, i.e. $wt_{\mathrm{H}}(\mu_\ell(\vec{v})) = \omega$ and thus $|\sigma_\ell(\ve{v})| = \omega+1$, which corresponds to the number of unambiguous positions for tandem duplications of length $\ell$. The number of possible support sets $\mathcal{J}$ of $\sigma_\ell(\ve{v})$ with $|\mathcal{J}| = i$ then is $\binom{\omega+1}{i}$. The vector $\mu_\ell(\ve{v})$ can be chosen to be any $q$-ary vector of length $n-(\nu+1)\ell$ that has zero-runs of length at most $\ell-1$ and Hamming weight $\omega$. The number of such vectors is given by $A(n - (\nu+1) \ell,\ell-1,\omega)$. At last, the first $\ell$ symbols $\ve{u}\in \mathbb{Z}_q^\ell$ can be chosen arbitrarily and thus have $q^\ell$ possibilities. \qed
	\end{proof}
	It can be deduced from the results in \cite{Kurmaev11} that for $\omega\geq 2$ the number of all $q$-ary vectors of length $n'$, maximum zero-run length $\ell'$ and weight $\omega$ is given by
	\begin{equation*}
	A(n',\ell',\omega) = (q-1)^\omega \left\{ \begin{array}{ll}  \sum\limits_{p=0}^{\ell'}\sum\limits_{j=0}^{\omega-1}  (-1)^j \big( 
	\binom{\omega-1}{j}
	\binom{n'-p-1-j(\ell'+1)}{\omega-1}- & \multirow{2}{*}{$n'>\ell'$} \\ \binom{n'-p-1-(j+1)(\ell'+1)}{\omega-1}\big), &  \\
	\binom{n'}{\omega}, & n'\leq \ell'
	\end{array} \right. .
	\end{equation*} 
	For $\omega=0$ and $\omega=1$, it is easy to verify that $A(n',\ell',0) = 1$ if $n'\leq \ell'$, $0$ otherwise, and $A(n',\ell',1) = (q-1)\max \{0, 2(\ell '+1)-n'\}$. 
	\subsection{Bound for Palindromic Deletions of Length $\ell=2$}
	\begin{lemma}
		\begin{align*}
		N_{\pi^\delta}(n,2,i) &= |\{\ve{x} \in \mathbb{Z}_2^n : |S_{\pi^\delta, 2}(\ve{x})| = i+1 \}| = \\ &= 2 \sum_{r_1=0}^{n}\sum_{r_{\mathcal{I},2}=0}^{\frac{n}{2}} \sum_{r_{\mathcal{B},2} = 0}^2 \sum_{r_3=0}^{\frac{n}{3}} \binom{2}{r_{\mathcal{B},2}} \binom{r_1+r_3}{r_3} \binom{r_1+r_3+i-r_{\mathcal{I},2}}{i-r_{\mathcal{I},2}} \cdot \\  & \quad \;\; \binom{r_1+r_3+i + r_{\mathcal{B},2}-2}{r_{\mathcal{I},2}} \binom{n-r_1+r_{\mathcal{I},2}-3(r_3+i)-2r_{\mathcal{B},2}-1}{i-r_{\mathcal{I},2}-1} .
		\end{align*}
	\end{lemma}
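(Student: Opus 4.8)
The plan is to reduce the counting problem to a run-length bookkeeping exercise via Lemma~\ref{lemma:palindromic_deletion_sphere_l_2}. That lemma states $|S_{\pi^\delta,2}(\ve{x})| = r_{\mathcal{I},2}(\ve{x}) + r_{\geq 4}(\ve{x}) + 1$, so a binary word $\ve{x}$ has sphere size $i+1$ exactly when $r_{\mathcal{I},2}(\ve{x}) + r_{\geq 4}(\ve{x}) = i$. I would count these words by grouping them according to their run-length profile, using the fact that a binary word is determined by its leading symbol (giving the prefactor $2$) together with its ordered sequence of run lengths, and that consecutive runs automatically carry opposite symbols, so no adjacency constraint need be imposed. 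I partition the runs by length into the classes $1$, $2$, $3$, and $\geq 4$, and further split the length-$2$ runs into \emph{interior} ones (count $r_{\mathcal{I},2}$) and \emph{boundary} ones, i.e.\ those that are the first or last run (count $r_{\mathcal{B},2}\in\{0,1,2\}$), since Lemma~\ref{lemma:palindromic_deletion_sphere_l_2} only counts interior length-$2$ runs. Setting $r_{\geq 4} = i - r_{\mathcal{I},2}$, which is forced by the sphere condition, leaves the four free parameters $r_1,r_{\mathcal{I},2},r_{\mathcal{B},2},r_3$ over which the stated expression sums.

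For a fixed profile I would count the words of length $n$ realizing it as a product of independent choices. The total number of runs is $r = r_1 + r_{\mathcal{B},2} + r_3 + i$. First I place the length-$2$ runs: choose which of the two boundary slots they occupy, $\binom{2}{r_{\mathcal{B},2}}$, and which of the $r-2$ interior slots hold the interior length-$2$ runs, $\binom{r-2}{r_{\mathcal{I},2}}$. The remaining $m := r_1 + r_3 + r_{\geq 4}$ slots are filled in order by the non-length-$2$ runs, which can be arranged in $\frac{m!}{r_1!\,r_3!\,r_{\geq 4}!} = \binom{r_1+r_3}{r_3}\binom{m}{r_{\geq 4}}$ ways. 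Finally the runs of length $\geq 4$ must absorb the leftover length $E = n - r_1 - 2(r_{\mathcal{I},2}+r_{\mathcal{B},2}) - 3r_3 - 4r_{\geq 4}$, and distributing this non-negative surplus over the $r_{\geq 4}$ long runs by stars and bars gives $\binom{E+r_{\geq 4}-1}{r_{\geq 4}-1}$. Multiplying by $2$ and substituting $r_{\geq 4} = i - r_{\mathcal{I},2}$ and $r-2 = r_1+r_3+i+r_{\mathcal{B},2}-2$ converts each binomial into the matching factor of the claimed formula; in particular the surplus binomial has top index $n - r_1 + r_{\mathcal{I},2} - 3(r_3+i) - 2r_{\mathcal{B},2} - 1$ and bottom index $i - r_{\mathcal{I},2} - 1$, exactly as stated.

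Summing the per-profile count over $r_1,r_{\mathcal{I},2},r_{\mathcal{B},2},r_3$ then yields the asserted identity; the loose upper limits $n$, $n/2$, $n/3$ are harmless because any profile whose fixed-length runs already overshoot $n$, or for which $r_{\geq 4} = i - r_{\mathcal{I},2} < 0$, contributes a vanishing binomial. I expect the main obstacle to be the careful treatment of the positional classification of the length-$2$ runs, namely verifying that the split $\binom{2}{r_{\mathcal{B},2}}\binom{r-2}{r_{\mathcal{I},2}}$ reproduces the interior/boundary distinction of Lemma~\ref{lemma:palindromic_deletion_sphere_l_2} without double counting, together with the degenerate cases: single-run words, the requirement $r\geq 2$ for the interior-slot count to be meaningful, and profiles with $r_{\geq 4}=0$ (where $i = r_{\mathcal{I},2}$), for which the surplus binomial must be read with the convention $\binom{-1}{-1}=1$ so that a word with no long run and $E=0$ is still counted exactly once. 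Once these boundary conventions are fixed and shown to agree with the summand, the identity follows by matching the five binomial factors term by term.
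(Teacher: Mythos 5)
Your proposal is correct and follows essentially the same route as the paper: reduce via Lemma~\ref{lemma:palindromic_deletion_sphere_l_2} to counting words with $r_{\mathcal{I},2}+r_{\geq 4}=i$, fix the run-length profile $(r_1, r_{\mathcal{I},2}, r_{\mathcal{B},2}, r_3, r_{\geq 4}=i-r_{\mathcal{I},2})$, count arrangements by placing boundary and interior length-$2$ runs separately, arrange the remaining runs multinomially (giving $\binom{r_1+r_3}{r_3}\binom{r_1+r_3+r_{\geq4}}{r_{\geq4}}$), and distribute the surplus length over the long runs by stars and bars. The only difference is presentational (the paper inserts run classes sequentially rather than assigning slots), and you additionally flag the degenerate-case conventions that the paper leaves implicit.
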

	\begin{proof}
		By Lemma \ref{lemma:palindromic_deletion_sphere_l_2} we have to find the number of words $\ve{x}\in \mathbb{Z}_2^n$ with $r_{\mathcal{I},2}(\ve{x}) + r_{\geq 4}(\ve{x}) = i$. Let $r_1, r_2, r_3, r_{\geq 4}$ denote the number of runs of length $1$, $2$, $3$ respectively length at least $4$ in $\ve{x}$. Further, let $r_{\mathcal{I},2}$ denote the number of runs of length $2$ in the interior of $\ve{x}$ and $r_{\mathcal{B},2} \in \{0,1,2\}$ the number of runs of length $2$, that are located at the boundaries of $\ve{x}$. Then, $r_2 = r_{\mathcal{I},2} + r_{\mathcal{B},2}$.
		
		We start with counting the number of words with a given run-distribution $r_1$, $r_2$, $r_3$ and $r_{\geq 4}$. To begin with, we insert runs of length $3$ between the runs of length $1$. In total, there are $\binom{r_1+r_3}{r_3}$ possible such arrangements. We then insert $r_{\geq 4}$ runs of length at least $4$ between these runs of length $1$, respectively $3$. There are $\binom{r_1+r_3+r_{\geq 4}}{r_{\geq 4}}$ possibilities to do so. Next, we insert the $r_{\mathcal{I},2}$ runs of length $2$ into a given constellation of runs of length $1$, $3$ and length at least $4$. As those runs cannot be inserted at the beginning or ending of $\ve{x}$, there are henceforth $\binom{r_1+r_2+r_3 + r_{\geq 4}-2}{r_{\mathcal{I},2}}$ possible combinations. As a final assembling step, we append the $r_{\mathcal{B},2}$ runs of length $2$ to the left and right of $\ve{x}$, where we have $\binom{2}{r_{\mathcal{B},2}}$ possibilities of choosing positions for the runs at the boundaries. Finally, using \cite[Lemma 2.2]{Kulkarni13}, there are
		\begin{equation*}
		\binom{n-r_1-2r_2-3r_3-3r_{\geq 4}-1}{r_{\geq 4}-1}
		\end{equation*}
		possibilities to choose the lengths of the runs of length at least $4$, since there are $n-r_1-2r_2-3r_3$ symbols that can be distributed onto these runs. Substituting $r_{\geq 4} = i - r_{\mathcal{I},2}$ and $r_2 = r_{\mathcal{I},2} + r_{\mathcal{B},2}$ and multiplying by $2$, since the first run can either start with $0$ or $1$ yields the statement. \qed
	\end{proof}
	\subsection{Comparison with Burst Insertion Correcting Codes}
	Figure \ref{fig:bounds_vs_burst} shows the lower bounds (LB) on the redundancy for binary codes and different duplication lengths $\ell$. We compare our results with maximum redundancies of single burst insertion correcting codes from \cite{Schoeny17}. To the best of our knowledge, these constructions have the largest codebooks that can correct a single burst insertion. The figure also includes the redundancies from a single tandem duplication correcting construction with cardinality at least
	\[|\mathcal{C}_\mathrm{VT}(n,\ell)| \geq 2^\ell \cdot \sum\limits_{\omega=0}^{n-\ell} \left\lfloor {\frac{\binom{n-\ell}{\omega}}{\omega+2}}\right\rfloor,\]
	which can be shown to exist based on the principle from \cite{Jain16} using Varshamov-Tenengolts codes~\cite{Varshamov65}. Interestingly, there is a significant gap between the redundancies of existing burst insertion constructions, which motivates a specialized code construction that corrects tandem and palindromic duplication errors.
	\begin{figure}[htp]
		\centering
		\input{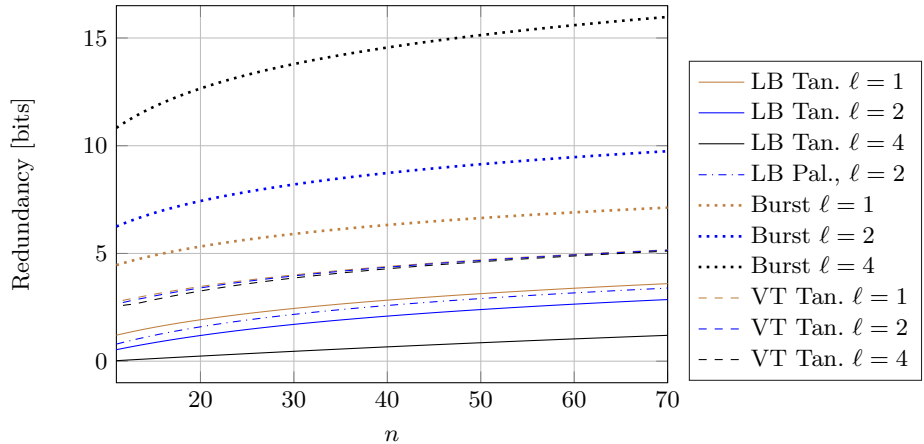}
		\caption{Tandem/palindromic duplication bounds vs. burst insertion redundancies}
		\label{fig:bounds_vs_burst}
	\end{figure}

	\appendix

	\section{Conditions for Equivalence of Palindromic Duplications in one Word} \label{sec:conditions_equivalence_palindromic_duplications}
	In this section we derive conditions that two palindromic duplications, respectively deletions at two different positions $i$ and $i+j$ with $j > 0$ result in the same word $\epsilon_{i,\ell}(\ve{x}) = \epsilon_{i+j,\ell}(\ve{x})$ for $\epsilon \in \{\pi, \pi^\delta\}$. These conditions help to find the sphere sizes $|S_{\epsilon, \ell, 1}(\ve{x})|$, as it has been illustrated in Section \ref{sec:error_spheres_tandem_palindromic_dupliations_deletions}.
	\subsection{Palindromic Duplications}
	For $j < \ell$ the condition $\pi_{i,\ell}(\ve{x}) = \pi_{i+j,\ell}(\ve{x})$ can be expressed as (the left hand side of the equations corresponds to $\pi_{i+j,\ell}(\ve{x})$ and the right hand side to $\pi_{i,\ell}(\ve{x})$)\\
	\vspace{-15pt}
	\begin{subequations}
		\begin{align}
		x_{i+\ell+1+m}  &= x_{i+\ell-m}, \quad m \in \{0, \dots, j-1\}, \label{eq:cond_j_less_l_a} \\
		x_{i+\ell+2j-m}  &= x_{i+\ell-m}, \quad m \in \{j, \dots, \ell-1\},\label{eq:cond_j_less_l_b}  \\
		x_{i+\ell+2j-m}  &= x_{i+1+m}, \quad m \in \{\ell, \dots, \ell+j-1\}. \label{eq:cond_j_less_l_c}
		\end{align}
	\end{subequations}
	For $j \geq \ell$ these conditions are \\
	\vspace{-15pt}
	\begin{subequations}
		\begin{align}
		x_{i+\ell+1+m}  &= x_{i+\ell-m}, \quad m \in \{0, \dots, \ell-1\}, \label{eq:cond_j_geq_l_a} \\
		x_{i+\ell+1+m}  &= x_{i+1+m}, \quad m \in \{\ell, \dots, j-1\}, \label{eq:cond_j_geq_l_b} \\
		x_{i+\ell+2j-m}  &= x_{i+1+m}, \quad m \in \{j, \dots, \ell+j-1\}. \label{eq:cond_j_geq_l_c}
		\end{align}
	\end{subequations}
	\subsection{Palindromic Deletions}
	The conditions $\pi^\delta_{i,\ell}(\ve{x}) = \pi^\delta_{i+j,\ell}(\ve{x})$ for $j>0$ are\\
	\vspace{-15pt}
	\begin{subequations}
		\begin{align}
		x_{i+\ell+1+m}  &= x_{i+\ell-m}, \quad m \in \{0, \dots, \ell-1\}, \label{eq:cond_del_a} \\
		x_{i+\ell+j+1+m}  &= x_{i+\ell+j-m}, \quad m \in \{0, \dots, \ell-1\}, \label{eq:cond_del_b} \\
		x_{i+2\ell+1+m}  &= x_{i+\ell+1+m}, \quad m \in \{0, \dots, j-1\}. \label{eq:cond_del_c}
		\end{align}
	\end{subequations}

	\bibliographystyle{unsrt}
	\bibliography{ref.bib}

\begin{thebibliography}{1}

\bibitem{Kulkarni13}
A.~A. Kulkarni and N.~Kiyavash.
\newblock Nonasymptotic upper bounds for deletion correcting codes.
\newblock {\em IEEE Trans. Inf. Theory}, 59(8):5115--5130, Aug 2013.

\bibitem{Fazeli15}
A.~Fazeli, A.~Vardy, and E.~Yaakobi.
\newblock Generalized sphere packing bound.
\newblock {\em IEEE Trans. Inf. Theory}, 61(5):2313--2334, May 2015.

\bibitem{Dolecek10}
L.~Dolecek and V.~Anantharam.
\newblock Repetition error correcting sets: Explicit constructions and
  prefixing methods.
\newblock {\em SIAM J. Discrete Mathematics}, 23(4):2120--2146, 2010.

\bibitem{Jain16}
S.~Jain, F.~Farnoud, M.~Schwartz, and J.~Bruck.
\newblock Duplication-correcting codes for data storage in the {DNA} of living
  organisms.
\newblock In {\em IEEE Int. Symp. Information Theory (ISIT), Barcelona}, pages
  1028--1032, July 2016.

\bibitem{Levenshtein66}
V.~Levenshtein.
\newblock Binary codes capable of correcting deletions, insertions and
  reversals.
\newblock {\em Soviet Physics Doklady}, 10:707--710, 1966.

\bibitem{Kurmaev11}
O.~F. Kurmaev.
\newblock Constant-weight and constant-charge binary run-length limited codes.
\newblock {\em IEEE Trans. Inf. Theory}, 57(7):4497--4515, July 2011.

\bibitem{Schoeny17}
C.~Schoeny, A.~Wachter-Zeh, R.~Gabrys, and E.~Yaakobi.
\newblock Codes correcting a burst of deletions or insertions.
\newblock {\em IEEE Trans. Inf. Theory}, 63(4):1971--1985, April 2017.

\bibitem{Varshamov65}
R.~R. Varshamov and G.~M. Tenengolts.
\newblock Codes which correct single asymmetric errors.
\newblock {\em Automation Remote Control}, 26(2):286--290, 1965.

\end{thebibliography}

\end{document}